      \newtheorem{theorem}{Theorem} 
      \newtheorem{corollary}{Corollary}
      \newtheorem{proposition}{Proposition}
      \newtheorem{definition}{Definition}
\newenvironment{proof}{{\sc Proof:}}{     
~\hfill\rule{2mm}{3mm}\vspace{.1in}}
\title{Distributions of Matching Distances in Topological Data Analysis}
\author{So Mang Han, Taylor Okonek, Nikesh Yadav, Xiaojun Zheng}
\affil{St.\ Olaf College, Northfield, Minnesota, USA}
\date{\today}
\begin{document}
\maketitle

\begin{abstract}
Topological data analysis seeks to discern topological and geometric structure of data, and to understand whether or not certain features of data are significant as opposed to random noise. While progress has been made on statistical techniques for single-parameter persistence, the case of two-parameter persistence, which is highly desirable for real-world applications, has been less studied. This paper provides an accessible introduction to two-parameter persistent homology and presents results about matching distance between 2-parameter persistence modules obtained from families of simple point clouds. Results include observations of how differences in geometric structure of point clouds affect the matching distance between persistence modules. We offer these results as a starting point for the investigation of more complex data.
\end{abstract}

\section{Introduction and Motivation} 

\textbf{Topological data analysis (TDA)} is a collection of methods used to discern the shape of data. TDA detects topological features, such as clusters, holes, and voids.
Topological methods are especially useful for high-dimensional, noisy data.
TDA has been applied in numerous settings, including image analysis \cite{image}, protein structure \cite{protein}, texture representation in images \cite{perea}, astronomical data \cite{astronomical}, and neuroscience \cite{sizemore}.

One of the main tools in TDA is \textbf{persistent homology}. Persistent homology associates to a dataset an algebraic object known as a persistence module, which encodes topological features of the data. The study of persistence modules can then reveal insights about the data that underlies the modules.

One common problem is to compare two datasets via their persistence modules. In this setting, notions of distance between persistence modules are useful for quantifying the amount of difference between persistence modules. This paper examines one such distance, the matching distance, which is easily computed. Our goal is to understand how the matching distance quantifies similarity between datasets.

We computed matching distances between persistence modules arising from datasets of two types. The first type of dataset consists of three points in various configurations. The second type of dataset consists of two circles with radii $r$, and circles were separated by a distance $d$. We examined how changes to $r$ or $d$ affect the matching distance.

The organization of the paper is as follows: In Section 2, we provide mathematical background for persistent homology and the matching distance. In Section 3, we describe our data analysis and matching distance computations. Discussion and directions for future research are provided in Section 4.

\section{Mathematical Background} 
 
Persistent homology is one of the main tools in TDA and can be applied to many types of data, including real-valued functions and sets of points in Euclidean space. It quantifies multi-scale topological features of data: connected components, holes, voids, and their higher-dimensional analogs.


Previous research has used persistent homology to discern topological structure in data from many fields \cite{image, protein, perea, astronomical, sizemore}. 
Nearly all of this prior work has used one-parameter persistent homology, which produces easily-visualized descriptors called barcodes, but which is sensitive to outliers. 
This sensitivity can be avoided by using two-parameter persistence.
We give here a brief introduction to persistent homology in both the one- and two-parameter settings; more detailed surveys of the subject are found in \cite{persistenthomology, ghristbarcodes}.

\subsection{One-parameter persistence}

Given a set of point-cloud data, we first build a simplicial complex.
Our building blocks are \textbf{simplices}: a point is a $0$-simplex; an edge is a $1$-simplex; a triangular face is a $2$-simplex. 
More formally, an $n$-simplex is an $n$-dimensional geometric object that is the convex hull of $n+1$ points which are not contained in any $(n-1)$-dimensional plane.
A \textbf{simplicial complex} $X$ is a set of simplices such that if $v \in X$, then every face of $v$ is also in $X$, and if $v, w \in X$, then $v \cap w$ is also in $X$.
A common type of simplicial complex built from a point cloud is the Rips complex, which we now define.

\begin{definition} Given a collection of points $\{ x_\alpha \}$ in Euclidean space $\mathds{E}^n$ and $\epsilon > 0$, the \textnormal{\textbf{Rips complex}} $\mathcal{R}_\epsilon$ is the simplicial complex whose $k$-simplices correspond to unordered $( k + 1 )$-tuples of points $\{x_\alpha \}^k_0$ whose pairwise distances are at most $\epsilon$. 
\end{definition}

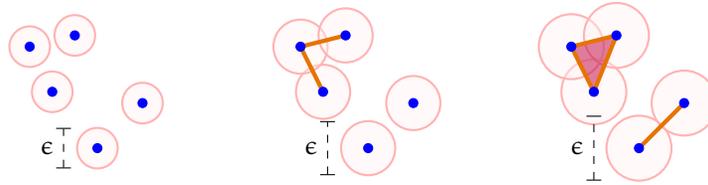
\begin{figure}[h]
  \captionsetup{justification=centering}  
  \begin{center}
  \begin{tikzpicture}[scale=0.3]
    \foreach \Point in {(7,0),(5,2.5),(9,2),(4,4.5),(6,5)}{
      \filldraw[color=red!30, fill=red!5, fill opacity=0.4, thick] \Point circle (0.9);
      \filldraw[blue] \Point circle (0.2);
    }
    \draw [|-|, thin, dashed] (5.5,-0.9) -- node [left] {$\epsilon$} ++(0,1.8);

    \foreach \Point in {(19,0),(17,2.5),(21,2),(16,4.5),(18,5)}{
      \filldraw[color=red!30, fill=red!5, fill opacity=0.4, thick] \Point circle (1.2);
    }
    \draw [orange!90!black, ultra thick] (16,4.5) -- (18,5);
    \draw [orange!90!black, ultra thick] (16,4.5) -- (17,2.5);
    \foreach \Point in {(19,0),(17,2.5),(21,2),(16,4.5),(18,5)}{
      \filldraw[blue] \Point circle (0.2);
    }
    \draw [|-|, thin, dashed] (17.2,-1.2) -- node [left] {$\epsilon$} ++(0,2.4);

    \foreach \Point in {(31,0),(29,2.5),(33,2),(28,4.5),(30,5)}{
      \filldraw[color=red!30, fill=red!5, fill opacity=0.4, thick] \Point circle (1.44);
    }
    \draw [orange!90!black, ultra thick] (31,0) -- (33,2);
    \draw[orange!90!black, ultra thick, fill=purple, fill opacity=0.5] (28,4.5) -- (30,5) -- (29,2.5) -- cycle;
    \foreach \Point in {(31,0),(29,2.5),(33,2),(28,4.5),(30,5)}{
      \filldraw[blue] \Point circle (0.2);
    }
    \draw [|-|, thin, dashed] (29,-1.44) -- node [left] {$\epsilon$} ++(0,2.88);

  \end{tikzpicture}
  \end{center}

  \caption{Three Rips complexes built from a dataset of five points, with different scale parameters $\epsilon$.}
\label{filtration}
\end{figure}

In other words, the Rips complex depends on a scale parameter $\epsilon$. The complex contains an edge between two points if and only if the distance between the points is at most $\epsilon$. The complex contains a triangular face for any three points whose pairwise distances are at most $\epsilon$. 
Figure \ref{filtration} shows three Rips complexes built from the same underlying point cloud, but with different scale parameters. For illustration purposes, we have drawn a ball of diameter $\epsilon$ around each data point. The complex contains an edge for each pair of balls that intersect and a triangular face for each three balls that intersect pairwise.

A Rips complex is built with a fixed scale parameter $\epsilon$, but usually no single choice of $\epsilon$ reveals all structure of the data. Instead, we consider many Rips complexes, one for \emph{every} positive value $\epsilon$. Imagine growing balls of diameter $\epsilon$ centered at each point: as $\epsilon$ increases from zero, an $n$-simplex appears whenever $n+1$ balls pairwise intersect. 
Figure \ref{filtration} shows three snapshots of this process, which leads us to the concept of a filtration.

A \textbf{filtration} is a sequence of simplicial complexes, each a subcomplex of the next: 
\begin{equation}\label{filt_eq}
  X_1 \subset X_2 \subset X_3 \subset \cdots \subset X_n \subset \cdots.
\end{equation}
Figure \ref{barcode} illustrates a filtration. In the figure, $X_i \hookrightarrow X_{i+1}$ denotes a map that takes each simplex in $X_i$ to its corresponding simplex in $X_{i+1}$; this is possible because $X_i$ is a subcomplex of $X_{i+1}$.
Note that if a simplex appears in $X_i$, it must be present in $X_j$ for all $j > i$.

If every complex in a filtration is a Rips complex, then we call the filtration a \textbf{Rips filtration}.
Given a finite point set, simplices appear at only finitely many values of $\epsilon$. Thus, a Rips filtration of a finite point set can be denoted
\[ \mathcal{R}_{\epsilon_0} \subset \mathcal{R}_{\epsilon_1} \subset \mathcal{R}_{\epsilon_2} \subset \cdots \subset \mathcal{R}_{\epsilon_n} \]
for some sequence $0 = \epsilon_0 < \epsilon_1 < \epsilon_2 < \cdots < \epsilon_n$.

Figure \ref{barcode} illustrates a Rips filtration; note that we have not drawn a complex for every  $\epsilon$ at which a simplex appears.
The shaded areas represent the triangular faces and, in $X_6$, the boundary of a $3$-simplex.

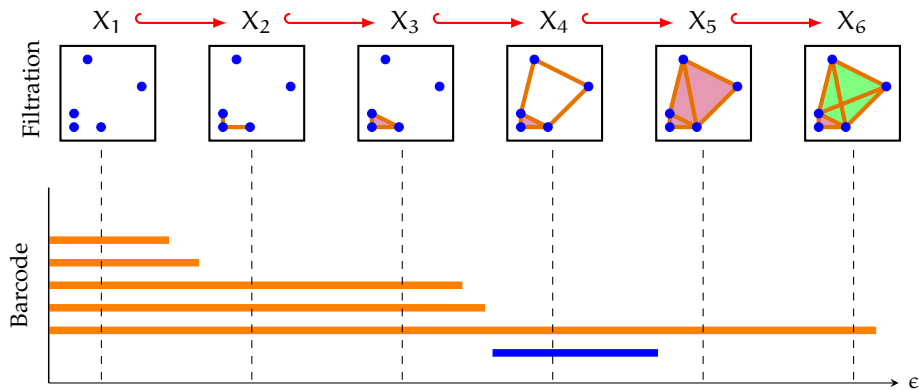
\begin{figure}[bh]
  \centering
  \captionsetup{justification=centering}  

  \begin{tikzpicture}[scale=0.18]
    \node [rotate = 90] at (-12.2,3.8) {\small Filtration};
    
    \draw [thick] (-10,0) rectangle ++(7,7);
    \draw [thick] (1,0) rectangle ++(7,7);
    \draw [thick] (12,0) rectangle ++(7,7);
    \draw [thick] (23,0) rectangle ++(7,7);
    \draw [thick] (34,0) rectangle ++(7,7);
    \draw [thick] (45,0) rectangle ++(7,7);
  
    \filldraw [purple, opacity=0.4] (13,1) -- (13,2) -- (15,1);
    \filldraw [purple, opacity=0.4] (24,1) -- (24,2) -- (26,1);
    \filldraw [purple, opacity=0.4] (35,1) -- (35,2) -- (37,1);
    \filldraw [purple, opacity=0.4] (46,1) -- (46,2) -- (48,1);
    \filldraw [purple, opacity=0.4] (36,6) -- (37,1) -- (35,2);
    \filldraw [purple, opacity=0.4] (36,6) -- (40,4) -- (37,1);
    \filldraw[green,opacity=0.5] (46,2) -- (48,1) -- (51,4) -- (47,6);
    
    \draw [orange!90!black, ultra thick] (2,1) -- (2,2);
    \draw [orange!90!black, ultra thick] (2,1) -- (4,1);
    \draw [orange!90!black, ultra thick] (13,1) -- (13,2);
    \draw [orange!90!black, ultra thick] (13,2) -- (15,1);
    \draw [orange!90!black, ultra thick] (24,1) -- (24,2);
    \draw [orange!90!black, ultra thick] (24,2) -- (26,1);
    \draw [orange!90!black, ultra thick] (35,1) -- (35,2);
    \draw [orange!90!black, ultra thick] (46,1) -- (46,2);
    \draw [orange!90!black, ultra thick] (46,2) -- (48,1);
    \draw [orange!90!black, ultra thick] (13,1) -- (15,1);
    \draw [orange!90!black, ultra thick] (24,1) -- (26,1);
    \draw [orange!90!black, ultra thick] (35,1) -- (37,1);
    \draw [orange!90!black, ultra thick] (46,1) -- (48,1);
    \draw [orange!90!black, ultra thick] (24,2) -- (25,6);
    \draw [orange!90!black, ultra thick] (26,1) -- (29,4);
    \draw [orange!90!black, ultra thick] (25,6) -- (29,4);
    \draw [orange!90!black, ultra thick] (36,6) -- (40,4);
    \draw [orange!90!black, ultra thick] (47,6) -- (51,4);
    \draw [orange!90!black, ultra thick] (37,1) -- (40,4);
    \draw [orange!90!black, ultra thick] (35,2) -- (36,6);
    \draw [orange!90!black, ultra thick] (37,1) -- (35,2);
    \draw [orange!90!black, ultra thick] (36,6) -- (37,1);
    \draw [orange!90!black, ultra thick] (48,1) -- (47,6);
    \draw [orange!90!black, ultra thick] (46,2) -- (47,6);
    \draw [orange!90!black, ultra thick] (46,2) -- (51,4);
    \draw [orange!90!black, ultra thick] (48,1) -- (51,4);
    
    \node at (-6.5,8.8) {$X_1$};
    \node at (4.5,8.8) {$X_2$};
    \node at (15.5,8.8) {$X_3$};
    \node at (26.5,8.8) {$X_4$};
    \node at (37.5,8.8) {$X_5$};
    \node  at (48.5,8.8) {$X_6$};
    
    \draw [thick, right hook-latex, red] (-4.5, 8.7) -- ++(6.8,0);
    \draw [thick, right hook-latex, red] (6.5, 8.7) -- ++(6.8,0);
    \draw [thick, right hook-latex, red] (17.5, 8.7) -- ++(6.8,0);
    \draw [thick, right hook-latex, red] (28.5, 8.7) -- ++(6.8,0);
    \draw [thick, right hook-latex, red] (39.5, 8.7) -- ++(6.8,0);
    
    \foreach \Point in {(-9,1),(-9,2),(-7,1),(-8,6),(-4,4),(2,1),(2,2),(4,1),(3,6),(7,4),(13,1),(13,2),(15,1),(14,6),(18,4),(24,1),(24,2),(26,1),(25,6),(29,4),(35,1),(35,2),(37,1),(36,6),(40,4),(46,1),(46,2),(48,1),(47,6),(51,4)}{
        \filldraw[blue] \Point circle (0.32);
    }
    \end{tikzpicture}

    \begin{tikzpicture}{scale=0.18}

    \draw [line width=0.1cm,orange] (0.01,-4.2) -- (1.6,-4.2);
    \draw [line width=0.1cm,orange] (0.01,-4.5) -- (2,-4.5);
    \draw [line width=0.1cm,orange] (0.01,-4.8) -- (5.5,-4.8);
    \draw [line width=0.1cm,orange] (0.01,-5.1) -- (5.8,-5.1);
    \draw [line width=0.1cm,orange] (0.01,-5.4) -- (11,-5.4);
    \draw [line width=0.1cm,blue] (5.9,-5.7) -- (8.1,-5.7);
    \node [rotate = 90] at (-0.4,-4.8) {\small Barcode};

    \draw [dashed] (0.7,-3) --(0.7,-6.1);
    \draw [dashed] (2.7,-3) --(2.7,-6.1);
    \draw [dashed] (4.7,-3) --(4.7,-6.1);
    \draw [dashed] (6.7,-3) --(6.7,-6.1);
    \draw [dashed] (8.7,-3) -- (8.7,-6.1);
    \draw [dashed] (10.7,-3) -- (10.7,-6.1);

    \draw [->,>=stealth] (0,-6.1) -- (11.3,-6.1);
    \draw [line width=0.02cm] (0,-6.1) -- (0,-3.5);
    \node [scale = 0.03cm] at (11.5,-6.1) {$\epsilon$};
  \end{tikzpicture}

\caption{\textit{Top}: Six complexes are shown from a Rips filtration built from a five-point dataset. \textit{Bottom}: The corresponding barcode; orange bars are zeroth homology (which tracks connected components) and blue bars are first homology (which tracks holes).}
\label{barcode}
\end{figure}

Any topological feature (such as a component or a hole) appears in the filtration at some scale parameter $\epsilon_1$ and disappears at some scale parameter $\epsilon_2$; the pair $(\epsilon_1, \epsilon_2)$ gives the persistence of the feature.
Plotting each persistence pair as a interval along the scale axis produces a \textbf{barcode}, as seen at the bottom of Figure \ref{barcode}.
The orange bars in Figure \ref{barcode} represent components: each of the five points corresponds to an orange bar starting at $\epsilon = 0$. An orange bar ends at each $\epsilon$ value at which two components become connected.
The blue bar represents the hole in the simplicial complex, which persists over a range of $\epsilon$ values.

The information in a barcode can also be visualized as a \textbf{persistence diagram}, which is a collection of points above the diagonal in the $xy$-plane. Bars in the barcode are in one-to-one correspondence with points in the persistence diagram.
A bar from $a$ to $b$ is plotted as the point $(a,b)$ in the persistence diagram.

In order to quantify the topological features of a simplicial complex, as illustrated in the barcode, we use the mathematics of \textbf{homology}. Homology associates a vector space to each simplicial complex and a linear map to each inclusion map in the filtration. 
The homology vector space $H_k(X)$ is generated by the $k$-dimensional holes of simplicial complex $X$.
Making this precise requires some definitions, which we introduce briefly; for more details, see {\cite{larry}}.

Let $C_k$ be a vector space whose basis consists of all $k$-simplices in simplicial complex $X$.
That is, $C_k$ contains \textbf{$k$-chains}, which are sums of $k$-simplices with coefficients in a field $\mathbb{F}$. In TDA, $\mathbb{F}$ is usually chosen to be the $2$-element field, a choice we make in this paper as well.
The \textbf{boundary operator} $\partial_k : C_k \to C_{k-1}$ maps a $k$-simplex to the sum of its $(k-1)$-faces, extending by linearity to $k$-chains. 
Let $B_k \subseteq C_K$ be the subspace of \textbf{boundaries}, which are images of $\partial_{k+1}$. 
Let $Z_k \subseteq C_k$ be the subspace of \textbf{cycles}, defined by the property that $v \in Z_k$ if and only if $\partial_k(v) = 0$.
Crucially, $B_k \subseteq Z_k$, since $\partial \circ \partial = 0$.
We then define the \textbf{homology vector space} $H_k = Z_k / B_k$.
Thus, $H_k$ is a vector space consisting of all cycles that are not boundaries.
The dimension of $H_k$ is the number of equivalence classes of holes, in the sense that two holes are equivalent if they differ by a boundary.


The homology of a filtration is a one-parameter \textbf{persistence module}. The inclusion maps in the filtration induce linear maps between the homology vector spaces. Specifically, the degree $i$ homology of the filtration in equation \eqref{filt_eq} is a persistence module consisting of the following vector spaces and linear maps:
\[ H_i(X_1) \to H_i(X_2) \to H_i(X_3) \to \cdots \to H_i(X_n) \to \cdots. \]
The structure theorem for persistence modules says each persistence module is the sum of interval modules; each interval gives the persistence of one topological feature in the filtration {\cite{Crawley}}. Thus, a barcode is a visualization of a persistence module, which each interval module shown as a bar.

In order to compare barcodes, we need a notion of distance between barcodes. We use the bottleneck distance, which is easily computable, though other options exist \cite{bubenik}. Before defining the bottleneck distance, we introduce the concept of a matching, which we explain in terms of persistence diagrams.

A \textbf{matching} $\eta$ between persistence diagrams $\mathcal{D}_1$ and $\mathcal{D}_2$ pairs each point in $\mathcal{D}_1$ with a point in $\mathcal{D}_2$ or a point on the diagonal line, and pairs each point in $\mathcal{D}_2$ with a point in $\mathcal{D}_1$ or a point on the diagonal.
For an illustration of a matching between two persistence diagrams, see Figure \ref{matching}.
By convention, we use the $L_\infty$ metric to obtain the distance from a point $x=(x_1,x_2)$ to its matched point $\eta(x)=(y_1,y_2)$:
\[ ||x-\eta(x)||_\infty = \max(|x_1-y_1|,|x_2-y_2|). \]
Let the \emph{size} of a matching refer to the supremum of the $L_\infty$ distance between matched points.
Among all possible matchings, we seek a matching with the smallest size.
The bottleneck distance between $\mathcal{D}_1$ and $\mathcal{D}_2$ is the size of this optimal matching, as defined below \cite{persistenthomology}. 

\begin{definition} 
The \textbf{bottleneck distance} between persistence diagrams $\mathcal{D}_1$ and $\mathcal{D}_2$ is
\[ d_B(\mathcal{D}_1, \mathcal{D}_2) = \inf_\eta \sup_x || x - \eta(x) ||_\infty, \]
where the supremum is taken over all matched points $x$ and the infimum is taken over all matchings $\eta$.
\end{definition}

Figure \ref{matching} shows the optimal matching between two persistence diagrams $\mathcal{D}_1$ and $\mathcal{D}_2$.
The size of this matching is given by the max $L_\infty$ distance between matched points, which is $\max(|a-c|,|b-d|)$. Since no other matching between these persistence diagrams has smaller size, the bottleneck distance $d_B(\mathcal{D}_1, \mathcal{D}_2)$ is equal to $\max(|a-c|,|b-d|)$.

\begin{figure}[h]
  \centering
  \captionsetup{justification=centering}  

  \begin{tikzpicture}[scale=0.65]
    \draw [<->,>=stealth] (7,0) -- (0,0) -- (0,7);
    \draw (0,0) -- (7,7);

    \draw [very thick,black] (3,6) -- (3.6,6.5);
    \draw [very thick,black] (2,5) -- (1.9,5.3);
    \draw [very thick,black] (2.4,2.4) -- (2,2.8);
    \draw [very thick,black] (3.3,3.7) -- (3.5,3.5);
    \draw [very thick,black] (5.7,6.4) -- (6.05,6.05);
    \draw [very thick,black] (4.6,5) -- (4.8, 4.8);

    \foreach \Point in {(3,6),(2,5),(4.6,5)}{
        \filldraw [blue] \Point circle (0.1);
    }
    \foreach \Point in {(3.6,6.5),(1.9,5.3),(2,2.8),(3.3,3.7),(5.7,6.4)}{
        \filldraw [red] \Point circle (0.1);
    }

    \node [scale = 0.03cm] at (3,5.6) {$(a,b)$};
    \node [scale = 0.03cm] at (3.6,6.9) {$(c,d)$};
  \end{tikzpicture}

  \caption{A matching between persistence diagrams $\mathcal{D}_1$ (plotted in blue) and $\mathcal{D}_2$ (plotted in red). Some points in each diagram are matched to the diagonal.}
\label{matching}
\end{figure}
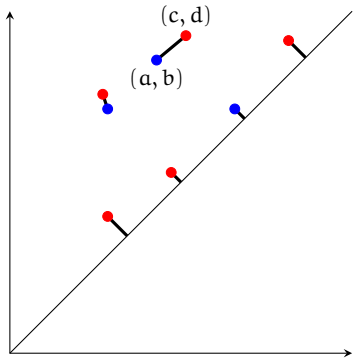


While one-parameter persistence is stable with respect to perturbation of the point cloud data, it is unstable with respect to the presence of outliers.
A density estimator on the points (i.e., a function that indicates whether each point has many nearby neighbors) might be able to identify outliers, but this requires introducing a threshold. Instead, we prefer to use the density estimator as a second filtration parameter, which brings us into the realm of two-parameter persistence.

\subsection{Two-parameter persistence}

Two-parameter persistence arises from data that is simultaneously indexed by two parameters. For example, suppose we have a point cloud $\mathcal{P}$ and a real-valued function $f : \mathcal{P} \to \mathbb{R}$ on each point. In particular, $f$ may arise from a density estimator.
For any $r \in \mathbb{R}$, let 
\[ f^{-1}(-\infty,r] = \{ p \in \mathcal{P} \mid f(p) \le r\}. \]
We can then construct a Rips filtration from $f^{-1}(-\infty,r]$.
Repeating this construction for an increasing sequence $r_1 , r_2, \ldots, r_n$, we obtain a sequence of Rips filtrations, which yields a bifiltration.

A $\textbf{bifiltration}$ is a set of simplicial complexes, each indexed by two parameters, with inclusion maps in the direction of each increasing parameter.
Specifically, the set of simplicial complexes $\{C_{i,j}\}_{i,j}$ forms a bifiltration if there exist commuting inclusion maps 
\[ C_{i,j} \hookrightarrow C_{i',j'} \]
whenever $i \le i'$ and $j \le j'$.
Figure \ref{bifiltration} (left) gives an example of a bifiltration.

The homology of a bifiltration is a 2-parameter persistence module, which is a set of vector spaces $H_p(C_{i,j})$ with commuting linear maps in the directions of increase of $i$ and $j$, as illustrated in Figure \ref{bifiltration} (right). 

\begin{figure}[h]
  \centering
  \captionsetup{justification=centering}  
  \begin{tikzpicture}[scale=0.5]

    \draw [orange!90!black, ultra thick] (1,6.5) -- (1.75,7.5);
    \draw [orange!90!black, ultra thick] (1,11.5) -- (1.75, 12.5);
    \foreach \Point in {(1,1.5), (1.75,2.5), (1,6.5), (1.75,7.5), (1,11.5), (1.75, 12.5)}{
      \filldraw [blue] \Point circle (0.2);
    }
    \draw [orange!90!black, ultra thick] (6,6.5) -- (6.75,7.5);
    \draw[orange!90!black, ultra thick, fill=purple,fill opacity=0.5] (6,11.5) -- (8,11.5) -- (6.75,12.5) -- cycle;

    \foreach \Point in {(6,1.5), (8,1.5), (6.75,2.5), (6,6.5), (8,6.5), (6.75,7.5), (6,11.5), (8,11.5), (6.75, 12.5)}{
      \filldraw [blue] \Point circle (0.2);
    }

    \draw [gray] (0.3,0.9) rectangle ++ (2.2,2.2);
    \draw [gray] (0.3,5.9) rectangle ++ (2.2,2.2);
    \draw [gray] (0.3,10.9) rectangle ++ (2.2,2.2);
    \draw [gray] (5.4,0.9) rectangle ++ (3.3,2.2);
    \draw [gray] (5.4,5.9) rectangle ++ (3.3,2.2);
    \draw [gray] (5.4,10.9) rectangle ++ (3.3,2.2);

    \node at (1.4,3.6) {$C_{1,1}$};
    \node at (1.4,8.6) {$C_{1,2}$};
    \node at (1.4,13.6) {$C_{1,3}$};
    \node at (7.05,3.6) {$C_{2,1}$};
    \node at (7.05,8.6) {$C_{2,2}$};
    \node at (7.05,13.6) {$C_{2,3}$};

    \draw [thick, right hook-latex, red] (2.9, 2) -- ++(2,0);
    \draw [thick, right hook-latex, red] (2.9, 7.1) -- ++(2,0);
    \draw [thick, right hook-latex, red] (2.9, 12.1) -- ++(2,0);
    \draw [thick, right hook-latex, red] (1.4, 4.2) -- ++(0,1.4);
    \draw [thick, right hook-latex, red] (1.4, 9.2) -- ++(0,1.4);
    \draw [thick, right hook-latex, red] (6.95, 4.2) -- ++(0,1.4);
    \draw [thick, right hook-latex, red] (6.95, 9.2) -- ++(0,1.4);
       
  \end{tikzpicture}
  \hspace{2cm}
  \begin{tikzpicture}[scale=0.63]
    \draw[white] (-1,-0.8) rectangle (5,9);

    \node at (0,0) {$H_i(C_{1,1})$};
    \node at (0,4) {$H_i(C_{1,2})$};
    \node at (0,8) {$H_i(C_{1,3})$};
    \node at (4,0) {$H_i(C_{2,1})$};
    \node at (4,4) {$H_i(C_{2,2})$};
    \node at (4,8) {$H_i(C_{2,3})$};

    \draw [->,>=stealth,red,thick] (1.5,0) -- ++(1,0);
    \draw [->,>=stealth,red,thick] (1.5,4) -- ++(1,0);
    \draw [->,>=stealth,red,thick] (1.5,8) -- ++(1,0);
    \draw [->,>=stealth,red,thick] (0,0.8) -- ++(0,2.4);
    \draw [->,>=stealth,red,thick] (4,0.8) -- ++(0,2.4);
    \draw [->,>=stealth,red,thick] (0,4.8) -- ++(0,2.4);
    \draw [->,>=stealth,red,thick] (4,4.8) -- ++(0,2.4);
  \end{tikzpicture}

\caption{A sample bifiltration (left) and its corresponding 2-parameter persistence module in homology degree i (right).}
\label{bifiltration}
\end{figure}
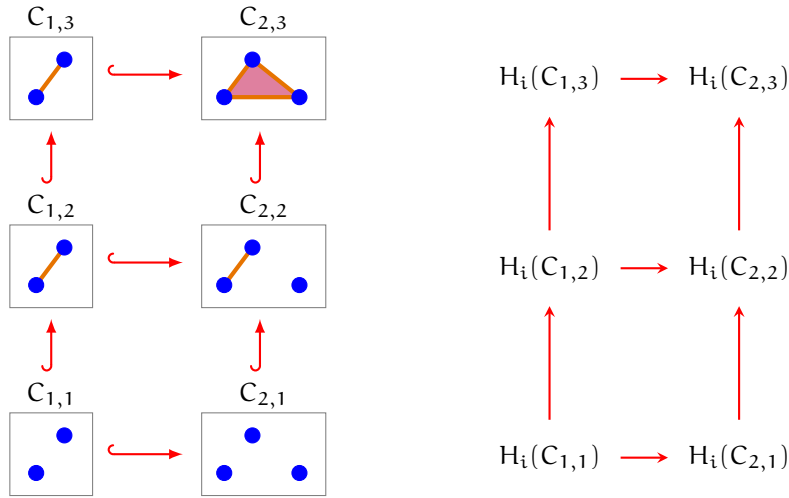

Unfortunately, the algebraic structure of 2-parameter persistence modules is extremely complicated, and there is no reasonable ``barcode'' for such modules \cite{lesnick}.
Instead, we can obtain a barcode along any line with nonnegative slope in the 2-parameter space by restricting the 2-parameter persistence module to such a line, as we now explain.

Let $\mathcal{M}$ be a 2-parameter persistence module with parameter values in discrete indexing sets $I$ and $J$.
Denote the vector spaces in $\mathcal{M}$ as $M_{i,j}$ for every $i \in I$ and $j \in J$, with linear maps $M_{i,j} \to M_{i',j'}$ whenever $i \le i'$ and $j \le j'$.
We may then adopt a continuous perspective, assigning a vector space from $\mathcal{M}$ to every point in $(x,y) \in \mathbb{R}^2$.
If $x < \min(I)$ or $y < \min(J)$, then the point $(x,y)$ is assigned the zero vector space; otherwise, the vector space assigned to $(x,y)$ is $M_{a,b}$, where $a = \max\{i \in I \mid i \le x\}$ and $b = \max\{j \in J \mid j \le y\}$, as illustrated in Figure \ref{2d}.

Let $\ell$ be a line in $\mathbb{R}^2$ with non-negative slope. Let $\mathcal{M}_\ell$ be the 1-parameter persistence module obtained by restricting $\mathcal{M}$ to line $\ell$: every point along $\ell$ is assigned the homology vector space of $\mathcal{M}$ at that point in $\mathbb{R}^2$, with linear maps induced from $\mathcal{M}$ (as in Figure \ref{2d}). 
Since $\mathcal{M}_\ell$ is a 1-parameter persistence module, it has a barcode, or equivalently, a persistence diagram.

\begin{figure}[h]
  \centering
  \captionsetup{justification=centering}  
  \begin{tikzpicture}[scale=1.5]
    
    \filldraw[red!40] (1,1) -- (1,2) -- (2,2) -- (2,1);
    \node at (1.5,1.5) {$M_{1,1}$};
    \node at (2.5,2.5) {$M_{2,2}$};
    \node at (1.5,2.5) {$M_{1,2}$};
    \node at (2.5,1.5) {$M_{2,1}$};
    \node at (1.5,3.5) {$M_{1,3}$};
    \node at (2.5,3.5) {$M_{2,3}$};
    
    \node at (0.5,0.6) {$\iddots$};
    \node at (1.5,0.6) {$\vdots$};
    \node at (2.5,0.6) {$\vdots$};
    \node at (3.5,0.6) {$\ddots$};
    
    \node at (0.5,1.5) {$\cdots$};
    \node at (0.5,2.5) {$\cdots$};
    \node at (0.5,3.5) {$\cdots$};
    \node at (3.5,1.5) {$\cdots$};
    \node at (3.5,2.5) {$\cdots$};
    \node at (3.5,3.5) {$\cdots$};

    \node at (0.5,4.5) {$\ddots$};
    \node at (1.5,4.5) {$\vdots$};
    \node at (2.5,4.5) {$\vdots$};
    \node at (3.5,4.5) {$\iddots$};
    
    \draw [red, thick] (1,0.2) -- (1,4.8);
    \draw [red, ultra thick] (1,1) -- (1,2);
    \draw [red, thick] (2,0.2) -- (2,1);
    \draw [red, ultra thick, dashed] (2,1) -- (2,2);
    \draw [red, thick] (2,2) -- (2,4.8);
    \draw [red, thick] (3,0.2) -- (3,4.8);
    
    \draw [red, thick] (0.2,1) -- (3.8,1);
    \draw [red, thick] (0.2,2) -- (1,2);
    \draw [red, ultra thick] (1,1) -- (2,1);
    \draw [red, ultra thick, dashed] (1,2) -- (2,2);
    \draw [red, thick] (2,2) -- (3.8,2);
    \draw [red, thick] (0.2,3) -- (3.8,3);
    \draw [red, thick] (0.2,4) -- (3.8,4);

    \draw[<->, blue, ultra thick] (0.2,0.7) -- ++(3.6,3.6);
    \node [blue] at (3.5, 3.8) {$\ell$};

    \draw [->,>=stealth] (0.1,0) -- node[below] {Parameter 1} ++(3.8,0);
    \draw [->,>=stealth] (0,0.1) -- ++(0,4.8);
    \node [rotate=90] at (-0.18,2.5) {Parameter 2};
  \end{tikzpicture}

  \caption{A 2-parameter persistence module with discrete parameter values may be viewed from a continuous perspective. Vector space $M_{1,1}$ is associated with all points inside the shaded region (with half-open boundary) in the 2-parameter plane, and similarly for the other vector spaces. The vector spaces along any line $\ell$ with nonnegative slope then form a 1-parameter persistence module, with linear maps induced from the 2-parameter persistence module.} 
\label{2d}
\end{figure}
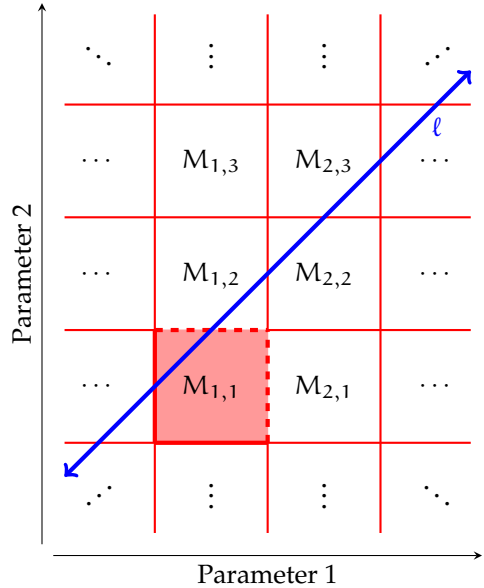

Furthermore, we can define a distance between 2-parameter persistence modules by considering the bottleneck distances between persistence diagrams along all possible lines through the 2-parameter space.
In the following definition, $\mathcal{D}(\mathcal{M}_\ell)$ denotes the persistence diagram of the 1-parameter module $\mathcal{M}_\ell$.

\begin{definition}\label{md_def}
The \textnormal{\textbf{matching distance}}, $d_{M}$, between two 2-parameter persistence modules $\mathcal{M}$ and $\mathcal{N}$ is the supremum of the bottleneck distances between the persistence diagrams on corresponding lines of non-negative slope in the two modules. 
Precisely,
\[ d_{M} = \sup_{\ell} \{ d_B(\mathcal{D}(\mathcal{M}_\ell), \mathcal{D}(\mathcal{N}_\ell)) \cdot \mathrm{weight} \left(\mathrm{slope}(\ell) \right) \}, \]
where the supremum is over all lines of nonnegative slope and $\mathrm{weight}(m) = \frac{1}{\sqrt{1+q^2}}$, where $q = \max\left(m, \frac{1}{m}\right)$.

\end{definition} 

In the definition of the matching distance, a weight is assigned to each line $\ell$, which depends on the slope $\ell$. A line with slope $1$ gets the maximum weight, and the weight approaches zero as the slope approaches zero or infinity.
The weight is chosen such that if the interleaving distance between two persistence modules is 1, then the weighted bottleneck distance is at most 1 \cite{interleaving, lesnick}.\footnote{Prior to computing the matching distance, the persistence modules $\mathcal{M}$ and $\mathcal{N}$ are often \emph{normalized}. That is, the parameter axes for each module are rescaled so that the parameter values for all generators and relations occur in specified intervals on each axis. For details, see \cite{rivetPython}.}

\section{Computations and Analysis}

Our datasets are point clouds with simple structure depending on a few parameters. Adjusting these parameters allows us to change the size of topological features (namely, components and holes) that are captured by the persistence modules. 
We expect the matching distance between these persistence modules to reflect the parameter differences in the underlying datasets.

From each dataset, we construct a \textbf{density-Rips bifiltration}. That is, our two parameters are a density estimator and Euclidean distance. A density estimator $f$ is assigned to each point $p$ such that $f(p)$ is small if $p$ has many nearby neighbors and $f(p)$ is large if $p$ has no nearby neighbors; this causes points with more neighbors to appear before points with few neighbors in the density filtration.\footnote{We used a $k$-nearest-neighbor density estimator, but many other options are available.}
For any $r \in \mathbb{R}$, the Rips filtration is constructed on $f^{-1}(-\infty, r]$.
This produces a 2-parameter family of simplicial complexes with inclusion maps in the increasing directions of both density and distance.

We begin with very simple point-cloud datasets, each consisting of three points in the $xy$-plane.
We examine the matching distances while keeping two points fixed and moving the third point around the plane.
Three points are enough to produce a density-Rips bifiltration that is nontrivial in both density and distance.
Thus, we regard these datasets as the simplest datasets that allow us to study the effect of moving a single point on the matching distance.

The second collection of datasets consists of points sampled from two circles with radii $r$ and separated by distance $d$. These datasets have nontrivial structure in persistent homology of degree zero and one.
We generated many datasets with various values of $d$ and fixed $r$, investigating how varying $d$ affects the matching distance computed from zero-degree persistent homology modules.
We also fix $d$ and vary $r$, investigating the matching distance between first-degree persistent homology modules. 

We computed two-parameter persistent homology using RIVET, an interactive visualization software developed by Michael Lesnick and Matthew Wright \cite{rivet}.
A detailed description of RIVET and its algorithms appears in a comprehensive preprint by Lesnick and Wright \cite{lesnick2015interactive}.
Given point-cloud data, RIVET computes a 2-parameter persistence module, and then computes barcodes along linear slices of the persistence module. 
We approximated the matching distance from these barcodes using Python code written by Bryn Keller and Michael Lesnick \cite{rivetPython}, which uses a finite set of lines to approximate the supremeum in Definition \ref{md_def}.\footnote{The approximation algorithm requires us to specify the number of lines used in the approximation. This is achieved by specifying a \emph{grid-size} parameter, which determines the number of different slope and intercept values that the algorithm uses. For example, if  grid-size is $20$, then the algorithm uses $20$ slope values and $20$ intercept values to produce $400$ lines, computing the bottleneck distance along each. In our experiments, we found we found little difference in the approximated matching distance when grid-size was set to $20$ or to a large value, such as $50$, though the computation time increases according to the square of the grid-size value.}
We regard the matching distance as giving a measure of similarity between two point-cloud datasets.

\subsection{Three-Point Datasets}

For our first investigation, each data set consisted of two fixed points $A = (1,1)$ and $B = (6.1,1)$, and a third point $C_i$ in the $xy$-plane. 
Figure \ref{threedot} shows the locations of $A$ and $B$ (in red), as well as the locations of all $C_i$ (blue).
Since $A$ and $C_i$ are always the closest pair of points, we assigned to these points a density parameter of $1$, and then we assigned point $B$ a density parameter of $2$.

\begin{figure}[!htb] 
  \centering 
  \begin{tikzpicture} [scale=0.55]

\draw [line width=0.04cm] (0,0) -- (0,10);
\draw [line width=0.04cm] (0,0) -- (16,0);
\node [scale = 0.03cm] at (8,-1) {$x$};
\node [rotate = 90,scale = 0.03cm] at (-1,5) {$y$};

\node [scale = 0.03cm] at (-0.3,-0.3) {0};
\draw [line width=0.05cm] (0,3) -- (-0.2,3);
\node [line width=0.03cm] at (-0.5,3) {3};
\draw [line width=0.05cm] (0,6) -- (-0.2,6);
\node [line width=0.03cm] at (-0.5,6) {6};
\draw [line width=0.05cm] (0,9) -- (-0.2,9);
\node [line width=0.03cm] at (-0.5,9) {9};

\draw [line width=0.05cm] (5,0) -- (5,-0.2);
\node [line width=0.03cm] at (5,-0.5) {2};
\draw [line width=0.05cm] (10,0) -- (10,-0.2);
\node [line width=0.03cm] at (10,-0.5) {4};
\draw [line width=0.05cm] (15,0) -- (15,-0.2);
\node [line width=0.03cm] at (15,-0.5) {6};

\foreach \Point in {(2.5,2),(15.2,2)}{
    \node[red,scale=0.04cm] at \Point {\textbullet};
}

\node [scale = 0.03cm] at (2.5,1.5) {$A$};
\node [scale = 0.03cm] at (15,1.5) {$B$};
\draw [line width=0.05cm] (9,5) -- (12,5);
\node [scale = 0.03cm] at (13.7,5) {choices of $C_i$};

\foreach \x in {0, 0.46, 0.92, 1.38, 1.84, 2.30, 2.76, 3.22, 3.68, 4.14, 4.61, 5.07, 5.53, 5.99, 6.45, 6.91, 7.37, 7.83, 8.29, 8.75}{
 \draw[fill,blue] (\x,2) circle (2pt);
 \draw[fill,blue] (\x,2.5) circle (2pt);
 \draw[fill,blue] (\x,3) circle (2pt);
 \draw[fill,blue] (\x,3.5) circle (2pt);
 \draw[fill,blue] (\x,4) circle (2pt);
 \draw[fill,blue] (\x,4.5) circle (2pt);
 \draw[fill,blue] (\x,5) circle (2pt);
 \draw[fill,blue] (\x,5.5) circle (2pt);
 \draw[fill,blue] (\x,6) circle (2pt);
 \draw[fill,blue] (\x,6.5) circle (2pt);
 \draw[fill,blue] (\x,7) circle (2pt);
 \draw[fill,blue] (\x,7.5) circle (2pt);
 \draw[fill,blue] (\x,8) circle (2pt);
 \draw[fill,blue] (\x,8.5) circle (2pt);
 \draw[fill,blue] (\x,9) circle (2pt);
 \draw[fill,blue] (\x,9.5) circle (2pt);
 \draw[fill,blue] (\x,10) circle (2pt);
 }   

\end{tikzpicture}

\caption{Data sets used in three point exploration. The red points $A$ and $B$ are two fixed points, and the blue points are the possibilities for the point $C_i$.}
\label{threedot}
\end{figure}
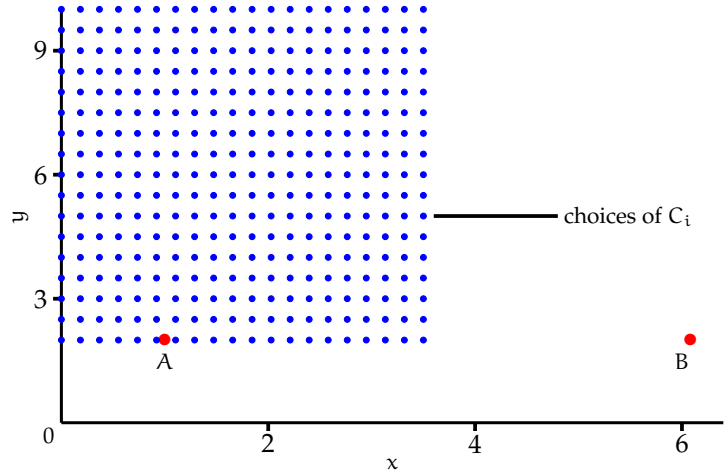

For concreteness, let $X_{r,s}$ denote a 3-point dataset where $C_i$ has coordinates $(r,s)$ . That is, $X_{r,s} = \{A, B, (r,s)\}$.
Let $X_{t,u} = \{A, B, (t,u)\}$ be another 3-point dataset.
We compute the matching distance between the 2-parameter persistence modules constructed from $X_{r,s}$ and $X_{t,u}$.

First, we fix $s = u = 3$. Figure \ref{distribution1} displays the matching distance between the 2-parameter persistence modules constructed from $X_{r,3}$ and $X_{t,3}$, for various choices of $r$ and $t$. The horizontal axis gives $r$, and the color of each curve represents the value of $t$. Specifically, $t$ ranges from $0$, colored dark green, to $3.3$, colored brown, and the step size is $.184$.

We observe that when $s$ and $u$ are small, such as $s = u = 3$, as shown in Figure \ref{distribution1}, there is a relatively linear increase in matching distances as the $t$ increases (above some threshold) with $r$ fixed. Some of the curves display a nonlinear region for small values of $t$; We note that these curves have $t$ smaller than $1$, which is the $x$-coordinate for $A$. 

\begin{figure}[ht]
    \centering
    \captionsetup{justification=centering}  
    \includegraphics[scale=0.5]{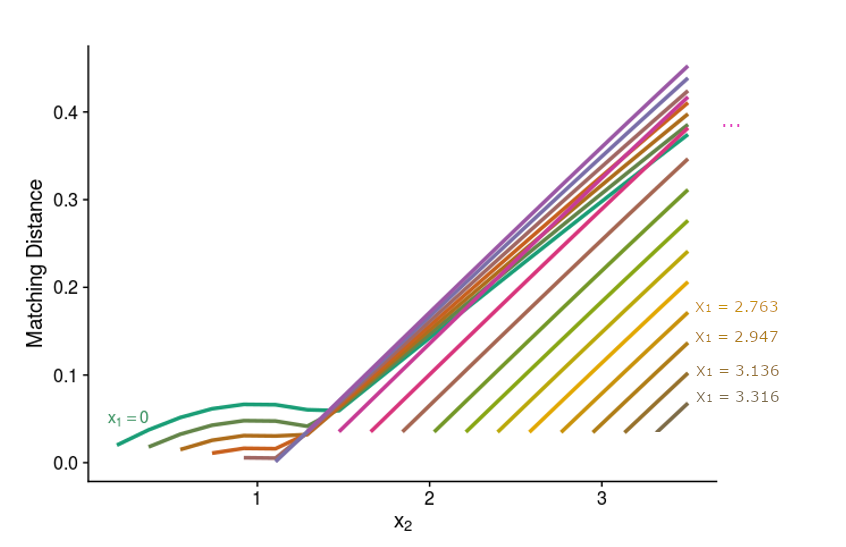}
    \caption{ Distribution of matching distance between $X_{r,3}$ and $X_{t,3}$. The horizontal axis gives the value of $t$, the color of the plot gives the value of $r$, and the vertical axis gives the matching
    distance between two datasets.}
    \label{distribution1}
\end{figure}

Next, we fix $s = u = 10$. Figure \ref{distribution1} displays the matching distance between the 2-parameter persistence modules constructed from $X_{r,10}$ and $X_{t,10}$, for various choices of $r$ and $t$.
Similarly, Figure \ref{distribution2} shows how the matching distance depends on $r$ and $t$ in this case.
It is clear that no linear trend in matching distances is present in these cases, as the matching distance increases faster as $r$ increases.
We note a prominent feature in Figure \ref{distribution2} is that the matching distance attains the value $0$; the following proposition explains why this occurs.

\begin{figure}[ht]
    \centering
    \captionsetup{justification=centering}  
    \includegraphics[scale=0.5]{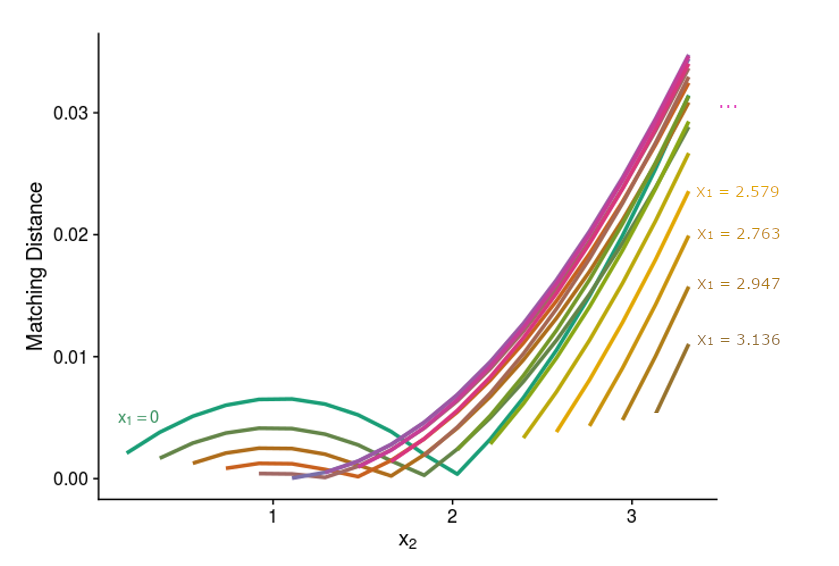}
    \caption{Distribution of matching distance between $X_{r,10}$ and $X_{t,10}$. The horizontal axis gives the value of $t$, the color of the plot gives the value of $r$, and the vertical axis gives the matching
    distance between two datasets.}
    \label{distribution2}
\end{figure}

\begin{proposition}\label{threepointproposition}
Suppose $A$ and $B$ are points on the $xy$-plane, such that the Euclidean distance between $A$ and $B$ is $d > 0$. Now suppose there are two points $C_1$ and $C_2$, both distance $r < d$ away from $A$, with the distance between $C_i$ and $B$ as $h_i > d$ for $i \in \{1,2\}$. Then the matching distance between the 2-parameter persistence modules constructed from the two point clouds $\{A, B, C_1\}$ and $\{A, B, C_2\}$ is $0$. (See Figure \ref{matchingdist})
\end{proposition}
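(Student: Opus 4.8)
The plan is to prove something stronger than $d_M = 0$: that the two 2-parameter persistence modules $\mathcal{M}$ and $\mathcal{N}$ arising from $\{A, B, C_1\}$ and $\{A, B, C_2\}$ are \emph{isomorphic}. Once this is established, the restrictions $\mathcal{M}_\ell$ and $\mathcal{N}_\ell$ are isomorphic 1-parameter modules for every line $\ell$ of nonnegative slope, so they have identical persistence diagrams and $d_B(\mathcal{D}(\mathcal{M}_\ell), \mathcal{D}(\mathcal{N}_\ell)) = 0$. Taking the weighted supremum over all $\ell$ in Definition \ref{md_def} then forces $d_M = 0$.

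To prove the isomorphism, I would first write down the density-Rips bifiltration explicitly. With three points there are only three possible vertices, three possible edges ($AC_i$, $AB$, $BC_i$), and one possible triangle ($ABC_i$), so every complex in the bifiltration is determined by which of these seven simplices is present. A vertex enters at its density value, and since the nearest-neighbor distances that determine density (namely $r$ for $A$ and $C_i$, and $d$ for $B$, using $r < d < h_i$) are the same for both point clouds, the three vertices enter at identical bigrades in $\mathcal{M}$ and $\mathcal{N}$. An edge enters once both endpoints are present and the distance parameter reaches the edge length; thus $AC_i$ enters at distance $r$, $AB$ at distance $d$, and both $BC_i$ and the triangle $ABC_i$ enter at distance $h_i$ (the triangle requires all three pairwise distances, the largest being $h_i$).

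The key observation is that the only datum distinguishing the two point clouds is $h_i = |BC_i|$, and this value controls \emph{only} the appearance of the edge $BC_i$ and the triangle $ABC_i$. I would then compute homology to show these two simplices are invisible to it. Since there are no simplices of dimension $\ge 3$, we have $H_k = 0$ for $k \ge 2$. For $H_1$, the only candidate cycle is the loop formed by the three edges $AC_i$, $AB$, $BC_i$, which can close only once all three edges are present, i.e.\ at distance $h_i$; but the triangle $ABC_i$ fills this loop at the very same distance $h_i$, so the loop is never an unfilled cycle and $H_1 \equiv 0$ throughout the bifiltration. For $H_0$, since $r < d < h_i$, the edges $AC_i$ and $AB$ already connect all three vertices into a single component at distance $d$, before $BC_i$ appears; hence adding $BC_i$ merely places a chord inside an already-connected component and does not change $H_0$. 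Consequently the dimension of $H_0$ at every bigrade, together with the merge pattern that determines the inclusion-induced maps, depends only on the shared data (the density values and the distances $r$ and $d$) and not on $h_i$.

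Assembling these facts, $\mathcal{M}$ and $\mathcal{N}$ agree in every homology degree with matching structure maps, so they are isomorphic and $d_M = 0$. The step I expect to require the most care is the $H_0$ argument: it is not enough to check that $\dim H_0$ agrees at each bigrade; I must verify that the inclusion-induced maps $H_0(C_{i,j}) \to H_0(C_{i',j'})$ coincide for the two modules, which amounts to confirming that the identification of generators by connected component, and the bigrades at which components merge, is literally identical for $C_1$ and $C_2$. A secondary point worth stating explicitly is the claim that the density values are unaffected by $h_i$; this holds here because the nearest-neighbor distances governing each point's density are determined by $r$ and $d$ alone under the hypothesis $r < d < h_i$.
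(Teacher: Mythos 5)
Your proposal is correct and follows essentially the same route as the paper: both write out the density-Rips bifiltration explicitly, observe that the only datum distinguishing the two point clouds is the bigrade $h_i$ at which the edge $BC_i$ (and the triangle) appears, and show this difference is invisible to homology, so the two persistence modules coincide and every slice has bottleneck distance zero. Your version is in fact somewhat more careful than the paper's: you check $H_1 \equiv 0$ (the triangle fills the loop at the very parameter where it closes) and $H_k = 0$ for $k \ge 2$, and you flag that the inclusion-induced maps, not merely the dimensions of $H_0$, must agree --- points the paper's proof passes over by simply asserting that the bifiltrations are ``topologically equivalent.''
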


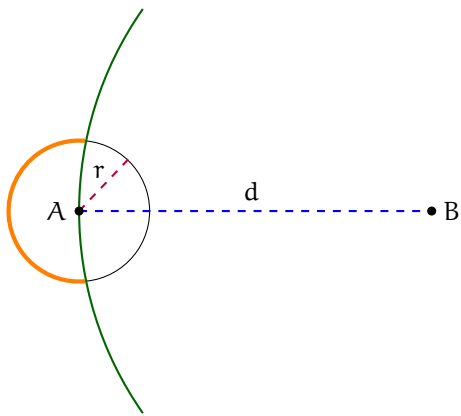
\begin{figure}[h]
    \centering
    \captionsetup{justification=centering}  
\begin{tikzpicture}[scale=0.75]
  \draw (1.25,2.2) circle (1.25cm);
  \draw[thick,dashed,blue] (1.25,2.2) -- (7.5,2.2);
  \draw[thick,dashed,purple] (1.25,2.2) -- (2.2,3.2);
  \node at (4.3,2.55) {$d$};
  \node at (1.6,2.9) {$r$};
  \draw [thick,black!60!green] ([shift=(215:6.25cm)]7.5,2.2) arc (215:145:6.25cm);
  \draw [ultra thick,orange] ([shift=(275:1.25cm)]1.25,2.2)arc (275:85:1.25cm);
  \filldraw (1.25,2.2) circle (2pt);
  \node at (0.85,2.2) {$A$};
  \filldraw (7.5,2.2) circle (2pt);
  \node at (7.86,2.2) {$B$};

\end{tikzpicture}
    \caption{Diagram of the points in Proposition \ref{threepointproposition}. Points $C_1$ and $C_2$ are located in the orange arc, at distance $r$ from $A$ and distance greater than $d$ from $B$.}
    \label{matchingdist}
\end{figure}

\begin{proof}
Since the distance between each $C_i$ and $A$ is smaller than the distance between each $C_i$ and $B$, the density parameter $1$ is assigned to $C_1$, $C_2$, and $A$, while point $B$ is assigned density parameter $2$. 
From such a point cloud $\{A, B, C_i\}$, we construct a  bifiltration as shown on the left side of Figure \ref{proof}, as we now explain, considering each density parameter value in turn.

\textit{Density 1}: When the distance parameter $\epsilon = 0$, only $C_1$ and $A$ appear, so we have two isolated points. When $\epsilon$ increases to $r$, $C_1$ and $A$ will connect to form an edge. As $\epsilon$ increases from $r$, the simplicial complex remains unchanged, since no other points exist to produce edges at this density.

\textit{Density 2}: When the scale parameter $\epsilon = 0$, all points $A,B,C_1$ in the point cloud appear, so we have three isolated points. When $\epsilon$ increases to $r$, $C_1$ and $A$ will form an edge. This results in one connected component and one isolated point.
When $\epsilon$ increases to $d$, an edge connects $B$ to $A$.
Since all points are connected at distance $d$, $H_0$ homology doesn't change as $\epsilon$ increases further.

Now consider the point cloud $\{A, B, C_2\}$. 
The bifiltration constructed from this point cloud is nearly the same as that described above; the only difference is the distance value $h_2$ at which the longest edge appears.
However, this edge does not connect any components that were not already connected at distance $d$, so no new (zeroth) homology appears at distance $h_2$.
Thus, we obtain topologically equivalent bifiltrations for the two data sets. This implies that the two 2-parameter persistence modules are the \textit{same}, and so barcodes along any linear slice of the two modules are also the same. 
Therefore, the matching distance between these modules is $0$.
\end{proof}

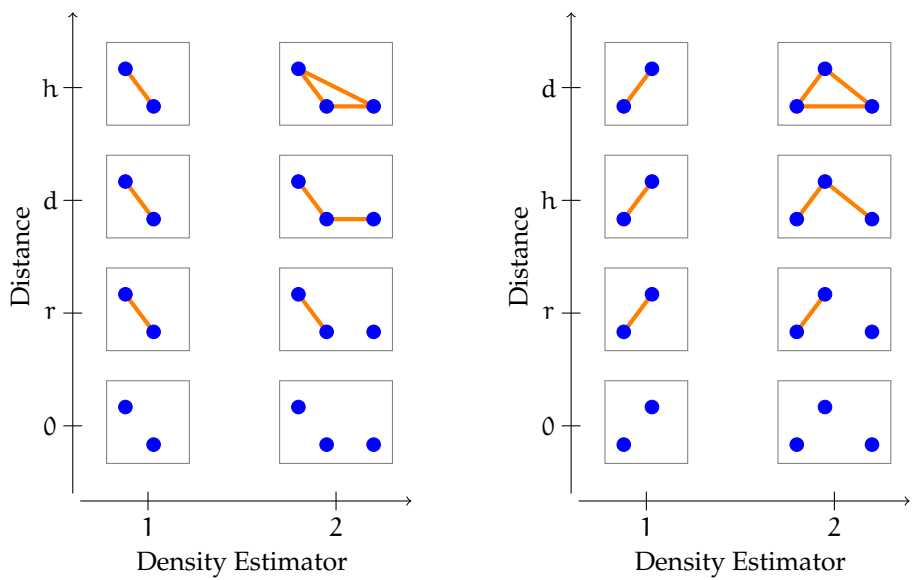
\begin{figure}[h]
  \centering
  \captionsetup{justification=centering}  
  \begin{tikzpicture}[scale=0.5]
    \draw [->] (0.2,0) -- (9,0);
    \draw [<-] (0,13) -- (0,0.2);
    \draw (0.25,2) -- (-0.25,2);
    \draw (0.25,5) -- (-0.25,5);
    \draw (0.25,8) -- (-0.25,8);
    \draw (0.25,11) -- (-0.25,11);
    \draw (2,-0.25) -- (2,0.25);
    \draw (7,-0.25) -- (7,0.25);
    \node at (4.5,-1.7) {Density Estimator};
    \node [rotate = 90] at (-1.4,6.5) {Distance};
    \node at (2,-0.7) {$1$};
    \node at (7,-0.7) {$2$};
    \node at (-0.6,2) {$0$};
    \node at (-0.6,5) {$r$};
    \node at (-0.6,8) {$d$};
    \node at (-0.6,11) {$h$};
    \draw [orange, ultra thick] (1.4,5.5) -- (2.15,4.5);
    \draw [orange, ultra thick] (1.4,8.5) -- (2.15, 7.5);
    \draw [orange, ultra thick] (1.4,11.5) -- (2.15, 10.5);
    \draw [orange, ultra thick] (6,5.5) -- (6.75,4.5);
    \draw [orange, ultra thick] (6,8.5) -- (6.75, 7.5);
    \draw [orange, ultra thick] (6,11.5) -- (6.75, 10.5);
    \draw [orange, ultra thick] (8,7.5) -- (6.75, 7.5);
    \draw [orange, ultra thick] (8,10.5) -- (6.75, 10.5);
    \draw [orange, ultra thick] (6,11.5) -- (8,10.5);
    \foreach \Point in {(1.4,2.5), (2.15,1.5), (1.4,5.5), (2.15,4.5), (1.4,8.5), (2.15, 7.5), (1.4,11.5), (2.15, 10.5), (6,2.5), (8,1.5), (6.75,1.5), (6,5.5), (8,4.5), (6.75,4.5), (6,8.5), (8,7.5), (6.75, 7.5), (6,11.5), (8,10.5), (6.75, 10.5)}{
        \filldraw [blue] \Point circle (.18);
    }
    \draw [gray] (0.9,1) rectangle ++(2.2,2.2);
    \draw [gray] (0.9,4) rectangle ++(2.2,2.2);
    \draw [gray] (0.9,7) rectangle ++(2.2,2.2);
    \draw [gray] (0.9,10) rectangle ++(2.2,2.2);
    
    \draw [gray] (5.5,1) rectangle ++(3.0,2.2);
    \draw [gray] (5.5,4) rectangle ++(3.0,2.2);
    \draw [gray] (5.5,7) rectangle ++(3.0,2.2);
    \draw [gray] (5.5,10) rectangle ++(3.0,2.2);
  \end{tikzpicture}
  \hspace{1cm}
  \begin{tikzpicture}[scale=0.5]
    \draw [->] (0.2,0) -- (9,0);
    \draw [<-] (0,13) -- (0,0.2);
    \draw (0.25,2) -- (-0.25,2);
    \draw (0.25,5) -- (-0.25,5);
    \draw (0.25,8) -- (-0.25,8);
    \draw (0.25,11) -- (-0.25,11);
    \draw (2,-0.25) -- (2,0.25);
    \draw (7,-0.25) -- (7,0.25);
    \node at (4.5,-1.7) {Density Estimator};
    \node [rotate = 90] at (-1.4,6.5) {Distance};
    \node at (2,-0.7) {$1$};
    \node at (7,-0.7) {$2$};
    \node at (-0.6,2) {$0$};
    \node at (-0.6,5) {$r$};
    \node at (-0.6,8) {$h$};
    \node at (-0.6,11) {$d$};
    \draw [orange, ultra thick] (1.4,4.5) -- (2.15,5.5);
    \draw [orange, ultra thick] (1.4,7.5) -- (2.15, 8.5);
    \draw [orange, ultra thick] (1.4,10.5) -- (2.15, 11.5);
    \draw [orange, ultra thick] (6,4.5) -- (6.75,5.5);
    \draw [orange, ultra thick] (6,7.5) -- (6.75, 8.5);
    \draw [orange, ultra thick] (6,10.5) -- (6.75, 11.5);
    \draw [orange, ultra thick] (8,7.5) -- (6.75, 8.5);
    \draw [orange, ultra thick] (8,10.5) -- (6.75, 11.5);
    \draw [orange, ultra thick] (6,10.5) -- (8,10.5);
    \foreach \Point in {(1.4,1.5), (2.15,2.5), (1.4,4.5), (2.15,5.5), (1.4,7.5), (2.15, 8.5), (1.4,10.5), (2.15, 11.5), (6,1.5), (8,1.5), (6.75,2.5), (6,4.5), (8,4.5), (6.75,5.5), (6,7.5), (8,7.5), (6.75, 8.5), (6,10.5), (8,10.5), (6.75, 11.5)}{
        \filldraw [blue] \Point circle (.18);
    }
    \draw [gray] (0.9,1) rectangle ++(2.2,2.2);
    \draw [gray] (0.9,4) rectangle ++(2.2,2.2);
    \draw [gray] (0.9,7) rectangle ++(2.2,2.2);
    \draw [gray] (0.9,10) rectangle ++(2.2,2.2);
    
    \draw [gray] (5.5,1) rectangle ++(3.0,2.2);
    \draw [gray] (5.5,4) rectangle ++(3.0,2.2);
    \draw [gray] (5.5,7) rectangle ++(3.0,2.2);
    \draw [gray] (5.5,10) rectangle ++(3.0,2.2);
  \end{tikzpicture}
  \caption{At left, the bifiltration constructed in the proof of Proposition \ref{threepointproposition}. At right, a similar bifiltration that results in a nonzero matching distance.}
    \label{proof}
\end{figure}

According to the proposition above, it would be easy to compute the probability of having matching distance of $0$ if $C_1$ and $C_2$ are randomly selected on circle $O_A$ and circle $O_B$ in Figure \ref{matchingdist}.
The following corollary gives such a probabilistic interpretation of the proposition; we leave the proof to the reader.

\begin{corollary} Suppose $A$ and $B$ are points on the $xy$-plane with distance between them $d > 0$. Let $O_A$ be a circle of radius $r<d$ centered at $A$, and $O_B$ a circle of radius $r<d$ centered at $B$.
Now suppose there are two points $C_1$ and $C_2$, which are randomly selected on the two circles. Then, the two point clouds $\{A, B, C_1\}$ and $\{A, B, C_2\}$ clouds have a matching distance of $0$ with probability 
$\left(1 - \frac{1}{\pi}\arccos^{-1}\left(\frac{r}{2d}\right) \right)^2$.
\end{corollary}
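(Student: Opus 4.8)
The plan is to reduce the statement to an elementary geometric-probability computation on a single circle and then invoke independence. Read the hypothesis as: $C_1$ is drawn uniformly from $O_A$ and $C_2$ uniformly (and independently) from $O_B$. By Proposition \ref{threepointproposition}, any two points on $O_A$ that both lie at distance greater than $d$ from $B$ produce the \emph{same} $2$-parameter persistence module, so every such ``clean'' $O_A$-cloud yields one fixed module $M_A$; likewise every clean $O_B$-cloud (a point of $O_B$ at distance greater than $d$ from $A$) yields a fixed module $M_B$. To connect these, I would record the symmetric analogue of the proposition: reflection across the perpendicular bisector of $AB$ is an isometry exchanging $A$ and $B$ and carrying $O_A$ to $O_B$, and it sends a clean $O_A$-cloud to a congruent clean $O_B$-cloud. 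Because the density-Rips bifiltration, and hence the matching distance, is invariant under isometry of the point cloud, we get $M_A \cong M_B$ and therefore $d_M(\{A,B,C_1\},\{A,B,C_2\}) = 0$ whenever both chosen points avoid the ``bad'' arc nearest the opposite fixed point (the complement of the orange arc in Figure \ref{matchingdist}).

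Next I would compute, for $C$ uniform on $O_A$, the probability that $|CB| > d$. Placing $A$ at the origin and $B = (d,0)$ and writing $C = (r\cos\theta, r\sin\theta)$, a direct expansion gives $|CB|^2 = r^2 - 2dr\cos\theta + d^2$, so that $|CB| > d$ is equivalent to $\cos\theta < \tfrac{r}{2d}$. Since $r < d$, the threshold $\tfrac{r}{2d}$ lies in $\left(0,\tfrac12\right)$, so the admissible angles form an arc of total measure $2\pi - 2\arccos\!\left(\tfrac{r}{2d}\right)$; dividing by $2\pi$, the probability is $1 - \tfrac{1}{\pi}\arccos\!\left(\tfrac{r}{2d}\right)$, which is the single-point probability appearing (up to the evident typo $\arccos^{-1}$ for $\arccos$) in the corollary. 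The reflection symmetry from the first paragraph shows that a uniform point on $O_B$ satisfies $|CA| > d$ with exactly the same probability.

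Since $C_1$ and $C_2$ are selected independently, the event that \emph{both} avoid their bad arcs factors as a product of two equal single-point probabilities, giving $\left(1 - \tfrac{1}{\pi}\arccos\!\left(\tfrac{r}{2d}\right)\right)^2$, as claimed. The arc-length count and this product step are routine once the symmetry lemma is in hand.

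The step requiring the most care is the converse inclusion hidden in the word ``probability'': to conclude that the matching distance equals $0$ \emph{exactly} on the good event, one must check that when a point lands in the bad arc --- say $|C_1B| < d$, so the edge $BC_1$ enters the Rips filtration before the edge $AB$ --- the resulting bifiltration is genuinely non-isomorphic to the clean one, and that this discrepancy registers as a strictly positive bottleneck distance along at least one line of nonnegative slope, so that $d_M > 0$. (The measure-zero boundary case $|CB| = d$ may simply be discarded.) I would flag this converse as the main obstacle, since the proposition itself supplies only the sufficient direction; the geometric and probabilistic computations above are the easy part.
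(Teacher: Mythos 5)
The paper gives no proof of this corollary (it is explicitly ``left to the reader''), so there is no official argument to compare against; judged on its own, your proposal carries out exactly the argument the paper gestures at, and the parts you complete are correct. The law-of-cosines step ($|CB|>d \iff \cos\theta<\tfrac{r}{2d}$), the resulting single-point probability $1-\tfrac{1}{\pi}\arccos\bigl(\tfrac{r}{2d}\bigr)$, the reading of the paper's ``$\arccos^{-1}$'' as a typo for $\arccos$, and the product step via independence are all right. Your reflection lemma deserves emphasis: Proposition \ref{threepointproposition} places both $C_1$ and $C_2$ on the circle about $A$, while the corollary places $C_2$ on $O_B$, so isometry-invariance of the density--Rips construction under the reflection exchanging $A$ and $B$ is genuinely required in order to apply the proposition at all, and you are the one supplying it.

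The gap you flag at the end is real: as written, your argument proves only $\Pr(d_M=0)\ge\bigl(1-\tfrac{1}{\pi}\arccos\bigl(\tfrac{r}{2d}\bigr)\bigr)^2$, whereas the corollary asserts equality, and the proposition supplies only sufficiency. Fortunately the converse closes with the same bookkeeping used in the proposition's proof. Suppose $C_1$ is bad, say $r\le h_1=|C_1B|<d$, while the other cloud is clean. Then for density parameter $\ge 2$ the bad module has $H_0$-rank $1$ on the strip $h_1\le\mathrm{distance}<d$ (its last merge occurs at $h_1$), while the clean module has rank $2$ there (its last merge occurs at $d$). Any line of slope $1$ crossing this strip therefore restricts the two modules to one-parameter modules with different rank functions, hence different barcodes, hence positive bottleneck distance; since the slope-$1$ weight is $1/\sqrt{2}>0$, this forces $d_M>0$. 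If both points are bad, their merge heights satisfy $h_1\neq h_2$ almost surely, and the same argument applies to the strip between them; the subcase $d<2r$, where a bad $C_i$ is nearer the opposite vertex than $r$ and the density labels change, differs only in bookkeeping, since the rank functions still disagree on a set of positive area. Adding this paragraph (and discarding the measure-zero boundary cases, as you note) completes the proof.
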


The following theorem gives a $n$-point generalization of the 3-point proposition. In the theorem, a dataset consists of $n$ vertices of a regular polygon in the $xy$-plane, as well as one additional point.

\begin{theorem} Suppose $A_1, A_2, \ldots, A_n$ are points on the $xy$-plane forming the vertices of a regular polygon with side length $d > 0$. Now suppose there are two points $C_1$ and $C_2$, both distance $r > 0$ (also $r < d)$ away from the $A_i$ that they are closest to, with the distance away from all other $A_i$ greater than $d$. 
Then, the matching distance between the 2-parameter persistence modules constructed from the two point clouds $\{A_1, A_2, \ldots, A_n, C_1\}$, and $\{A_1, A_2, \ldots, A_n, C_2\}$ will be $0$. 
\end{theorem}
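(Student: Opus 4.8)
The plan is to follow the strategy of Proposition \ref{threepointproposition}: rather than comparing barcodes line by line, I would argue that the two degree-zero 2-parameter persistence modules are in fact \emph{isomorphic}. Isomorphic modules restrict to identical 1-parameter modules along every line of non-negative slope, so each bottleneck distance $d_B(\mathcal{D}(\mathcal{M}_\ell), \mathcal{D}(\mathcal{N}_\ell))$ vanishes, and the weighted supremum in Definition \ref{md_def} is then $0$. Thus the whole theorem reduces to exhibiting such an isomorphism, and there is no need to analyze individual slices.

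First I would pin down the density values, mirroring the 3-point case. For each cloud, the point $C_i$ and the unique vertex $A_{j}$ closest to it form a mutual nearest-neighbor pair at distance $r$, while every other vertex has its nearest neighbor (an adjacent polygon vertex) at distance $d > r$. Hence the density estimator assigns $C_i$ and $A_j$ the smaller density value and all remaining vertices a larger one, exactly as $A,C_i$ received density $1$ and $B$ density $2$ in Proposition \ref{threepointproposition}. Crucially, this density structure is the same for $\{A_1,\dots,A_n,C_1\}$ and $\{A_1,\dots,A_n,C_2\}$, up to which vertex plays the role of $A_j$.

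Next I would describe the density-Rips bifiltration and isolate the edges that matter for $H_0$, since degree-zero merge events are precisely the edges joining two distinct components. Because $C_i$ attaches to $A_j$ at distance $r < d$ and the regular polygon becomes connected through its sides at distance exactly $d$, the entire point cloud is a single component once $\epsilon \ge d$, at every density level at which the relevant points are present. The core observation — generalizing the sentence ``no new (zeroth) homology appears at distance $h_2$'' in the proof of Proposition \ref{threepointproposition} — is that every edge distinguishing the two clouds, namely the edges from $C_i$ to its non-closest vertices together with the polygon diagonals, has length strictly greater than $d$ and therefore joins points that are already connected, leaving $H_0$ unchanged. Consequently each degree-zero module depends only on the $H_0$-relevant data: a pendant vertex joined to one polygon vertex by an edge of length $r$, together with the polygon sides of length $d$, distributed over the two density levels above. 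This data agrees for $C_1$ and $C_2$ up to a dihedral symmetry of the regular polygon carrying $A_{j_1}$ to $A_{j_2}$, giving the desired isomorphism.

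The main obstacle will be the bookkeeping in the last two steps. One must verify that the ``far'' edges fail to affect $H_0$ \emph{simultaneously at every density level}, not merely at the coarsest one as in the 3-point case, and one must confirm that the density values genuinely coincide across the two clouds, for which the symmetry of the regular polygon is essential (the exact angular positions of $C_1$ and $C_2$ need not match, only their distances $r$ and the common polygon geometry). I would also flag that the argument is specific to degree-zero homology: the diagonals and far edges can in principle alter $H_1$, so the statement should be understood as a claim about the $H_0$ persistence modules, consistent with the zero-degree focus of the surrounding computations.
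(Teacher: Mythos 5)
Your proposal is correct and takes essentially the same route as the paper's proof: both reduce the theorem to showing that the two density-Rips bifiltrations give rise to the same (isomorphic) degree-zero 2-parameter persistence module, so that the barcodes along every slice coincide and the matching distance is $0$. In fact your write-up is more careful than the paper's, which simply asserts the two bifiltrations are ``the exact same''; your explicit points --- that the distinguishing edges (from $C_i$ to far vertices, and the polygon diagonals) all appear after distance $d$ and hence never change $H_0$ at any density level, that a symmetry of the polygon handles the case where $C_1$ and $C_2$ are closest to different vertices, and that the claim is specific to degree-zero homology --- supply exactly what the paper leaves implicit.
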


\begin{proof} Similar to the proof for the proposition above, each point will be assigned a density parameter of $1$ or $2$. 
Since the distance between points $C_1, C_2$ and the respective $A_i$ that they are closest to is smaller than the distance between the $C_i$ and all other $A_i$, density $1$ is assigned to $C_1, C_2,$ and the respective $A_i$ that they are closest to (note that they could both be closest to the same $A_i$). All other $A_i$ are assigned a density of $2$. In the point clouds consisting of these points, we have a bifiltration for both density $1$ and density $2$.

In the bifiltration for density $1$, only $C_1$ and the $A_i$ that it is closest to appear, so the proof follows the same steps as the one for proposition. 

In the bifiltration for density $2$, all points in the point cloud appear (all $A_i$ and $C_1$). At $\epsilon = r$, $C_1$ and the $A_i$ it is closest to form an edge, but no other edges form between any points. As $\epsilon$ increases to $d$, all edges of the regular polygon appear.

Similarly, we have the exact same bifiltration for the point cloud consisting of $C_2$ and the $A_i$. Followed by the same reasonings in the proof for proposition, the matching distance between the two point clouds is $0$.
\end{proof}

\subsection{Two-Circle Datasets}

Our second investigation involved datasets consisting of points sampled from two circles with radii $r$, separated by distance $d$. 
For each dataset, two hundred points were selected randomly from a uniform distribution on each circle; Figure \ref{circle} displays an example.
Each point was assigned a density estimator defined as the distance to the $20$th nearest neighbor. 

\begin{figure}[h]
  \centering

  \begin{tikzpicture}
    \foreach \Point in {(0.93,-0.36),(-0.41,-0.91),(-0.93,0.38),(-0.46,-0.89),(-1,-0.06),(0.97,0.26),(0.9,-0.45),(0.52,0.85),(-0.99,-0.16),(-0.4,-0.92),(-0.76,-0.65),(-0.74,-0.67),(0.17,-0.98),(-0.45,-0.89),(0.25,-0.97),(0.18,-0.98),(-0.45,-0.89),(-0.46,-0.89),(-0.42,0.91),(-0.23,0.97),(-0.7,0.72),(0.17,-0.99),(-0.16,-0.99),(0.69,-0.72),(0.86,-0.51),(-0.77,-0.64),(-0.73,0.69),(0.03,1),(0.34,-0.94),(-0.99,-0.11),(-0.65,-0.76),(-0.56,-0.83),(0.87,0.5),(0.25,-0.97),(0.69,-0.72),(-0.35,0.94),(0.84,0.55),(-0.49,0.87),(-0.55,0.83),(0.2,-0.98),(0.94,-0.33),(0.92,-0.4),(0.92,0.39),(-0.52,-0.85),(0.81,0.59),(-0.99,-0.12),(0.22,0.97),(-0.48,0.88),(0.38,-0.93),(0.74,0.67),(-0.92,0.39),(0.21,-0.98),(-0.91,0.42),(0.43,0.9),(0.55,-0.83),(0.11,-0.99),(-0.5,-0.87),(0.62,0.79),(-0.42,-0.91),(0.23,0.97),(-0.58,0.82),(-0.79,0.61),(-0.56,-0.83),(0.75,0.66),(0.99,0.16),(0.61,0.79),(-0.71,0.71),(0.75,0.66),(0.99,0.1),(0.32,-0.95),(-0.42,0.91),(-0.13,-0.99),(-0.74,-0.67),(0.65,0.76),(0.92,-0.4),(-0.84,-0.54),(-0.69,0.72),(0.98,0.19),(-0.88,-0.47),(0.31,-0.95),(-0.62,-0.78),(-0.96,-0.29),(-0.83,-0.55),(0.85,0.52),(-0.21,-0.98),(-0.97,-0.24),(-0.07,1),(0.95,-0.31),(0.88,0.47),(0.6,0.8),(0.99,-0.12),(0.82,-0.57),(-0.15,0.99),(-0.98,-0.2),(0.23,-0.97),(-0.91,-0.42),(0.65,0.76),(0.62,-0.79),(0.06,1),(-0.56,0.83),(1,0.01),(0.25,0.97),(0.15,-0.99),(-0.68,-0.73),(0.95,-0.3),(-0.12,0.99),(-0.17,0.99),(0.87,0.49),(-0.91,-0.41),(0.54,0.84),(0.98,-0.19),(-0.55,-0.84),(-0.97,0.22),(0.97,-0.23),(-1,-0.07),(0.98,-0.2),(-0.01,1),(-1,-0.06),(0.76,-0.65),(0.89,-0.46),(-0.73,-0.69),(0.19,-0.98),(0.8,0.59),(-0.95,-0.32),(-0.31,-0.95),(-0.2,-0.98),(0.88,0.47),(1,-0.06),(-0.44,-0.9),(1,0.07),(0.83,-0.55),(-0.78,0.62),(-0.55,0.84),(-0.64,0.77),(0.6,-0.8),(0.65,-0.76),(-0.3,-0.95),(0.86,-0.51),(-0.78,0.62),(-0.17,0.99),(-0.92,-0.39),(-0.99,0.11),(0.75,0.66),(0.73,-0.69),(0.52,-0.85),(-0.03,-1),(-0.31,0.95),(-0.65,0.76),(0.44,0.9),(0.18,-0.98),(-0.99,0.11),(-0.97,0.24),(-0.93,-0.37),(-0.99,-0.12),(-0.82,-0.58),(0.21,-0.98),(-0.33,-0.94),(0.95,0.32),(0.74,-0.67),(0.55,-0.84),(0.96,-0.29),(0.87,0.5),(0.97,-0.26),(0.67,0.74),(-0.96,-0.28),(-0.62,0.79),(0.13,-0.99),(1,0.07),(-0.25,-0.97),(0.83,0.56),(0.93,0.38),(-0.08,1),(-0.93,0.37),(-0.52,-0.86),(-0.58,0.82),(0.35,0.94),(-0.1,-0.99),(0.93,-0.36),(0.74,0.67),(-0.85,-0.52),(0.76,-0.65),(-0.99,0.16),(0.09,1),(0.99,-0.1),(-0.88,-0.47),(-0.89,0.45),(0.28,-0.96),(-0.96,0.28),(-0.99,-0.16),(-0.72,-0.69),(-0.72,0.69),(0.97,0.26),(-0.69,0.72),(0.94,-0.33),(-0.89,-0.45),(-0.27,-0.96),(0.8,0.6),(-0.22,-0.97),(0.92,-0.4),(0.99,0.12),(3.45,0.89),(2.03,-0.22),(3.3,0.96),(3.69,-0.72),(3.61,-0.79),(2.67,-0.94),(2.75,0.97),(2.05,-0.3),(3.7,0.71),(3.24,0.97),(2.57,-0.9),(2.01,0.1),(2.01,-0.12),(2.18,-0.58),(3.98,-0.21),(2.82,-0.98),(2.25,-0.66),(2.3,-0.71),(2.01,0.14),(3.73,0.69),(2.14,0.51),(2.04,-0.29),(3.99,-0.14),(3.55,0.84),(2.74,-0.97),(3.12,-0.99),(3.84,0.54),(3.97,-0.26),(2.36,-0.77),(2.56,-0.9),(3.34,0.94),(2.22,-0.63),(3.83,0.56),(4,0.01),(2.46,0.84),(2.01,0.12),(3.61,-0.79),(2.11,-0.45),(3.67,-0.75),(3.98,-0.17),(3.71,-0.7),(2.58,0.91),(3.02,1),(3.94,0.35),(2.21,-0.62),(3.81,-0.58),(2.16,0.54),(2.01,0.15),(3.92,-0.4),(3.98,0.21),(3.93,0.38),(2.78,0.98),(3.95,0.33),(3.25,0.97),(3.84,0.54),(3.77,-0.64),(2.15,-0.53),(3.95,0.33),(3.96,-0.29),(3.98,-0.22),(2.02,0.18),(2.61,-0.92),(3.75,0.66),(2,-0.06),(3.96,-0.27),(2.04,0.29),(3.77,0.63),(2.38,0.78),(2,0.09),(2.06,0.34),(4,0.07),(3.83,0.56),(3.99,-0.14),(3.05,-1),(3.99,0.16),(4,-0.05),(2.96,-1),(2.73,-0.96),(2.48,-0.86),(3.84,-0.55),(3.89,0.45),(3.99,-0.13),(3.66,-0.75),(2,0),(2.01,-0.14),(2.05,-0.33),(2.24,0.65),(3.93,-0.37),(3.04,-1),(3.57,0.82),(3.7,0.72),(3.67,0.75),(3.57,-0.82),(3.37,-0.93),(3.93,-0.37),(3.69,-0.73),(3.93,0.37),(3.41,-0.91),(3.2,0.98),(3.96,-0.27),(3.3,0.95),(3.83,-0.55),(3.66,-0.75),(2.04,0.28),(3.95,-0.32),(2.62,-0.92),(2.99,1),(3.76,-0.64),(2.97,1),(3.52,0.85),(3.85,0.52),(3.84,-0.55),(3.67,-0.74),(3.18,0.98),(2.22,-0.62),(2.62,-0.92),(2.81,0.98),(2.02,0.21),(3.97,0.24),(3.43,0.9),(2.01,0.12),(2,0),(2.19,0.59),(3.08,-1),(2.13,0.49),(3.51,-0.86),(3.99,0.17),(2.17,-0.56),(3.87,0.49),(3.91,0.42),(3.72,-0.69),(2.06,0.34),(3.14,-0.99),(2.3,0.72),(2.63,0.93),(3.52,0.86),(2.44,-0.83),(2.72,-0.96),(3.44,0.9),(3.98,-0.19),(2,0.07),(3.21,-0.98),(3.69,-0.72),(3.63,0.78),(2,0.06),(4,-0.06),(3.79,-0.61),(3.59,0.81),(2.02,-0.19),(3.55,0.84),(2.06,0.34),(3.95,-0.32),(3.56,-0.83),(2.42,0.82),(2.47,-0.85),(3.58,0.81),(3.97,0.23),(3.65,0.76),(3.96,-0.29),(2.32,0.73),(3.67,-0.74),(2.13,0.49),(2.01,0.12),(3.11,-0.99),(2.34,-0.75),(2.55,-0.89),(3.53,0.85),(3.91,-0.4),(2.08,0.4),(3.71,0.7),(3.23,-0.97),(3.98,-0.19),(2,0.02),(3.45,0.89),(3.78,-0.63),(3.95,0.3),(3.92,-0.38),(3.99,0.12),(3.95,0.3),(3.51,-0.86),(2.29,0.7),(2.04,0.26),(2.11,0.46),(3.28,-0.96),(3.78,0.63),(3.26,0.97),(3.89,-0.46),(2.01,0.13),(3.34,-0.94),(3.26,0.97),(3.16,-0.99),(3.3,0.95),(2.33,0.74),(2.39,0.79),(2,-0.05),(3.52,-0.85),(3.98,-0.18),(3.9,0.44),(3.13,0.99),(3.72,-0.69)}{
      \filldraw \Point circle (0.7pt);
    }
    
    \draw [line width=0.02cm,black] (-2,-2) -- (-2,2);
    \draw [line width=0.02cm,black] (-2,-2) -- (5,-2);
    \draw [line width=0.02cm,black] (-2,2) -- (5,2);
    \draw [line width=0.02cm,black] (5,-2) -- (5,2);
    \draw [line width=0.008cm,gray] (-1,-2) -- (-1,2);
    \draw [line width=0.008cm,gray] (0,-2) -- (0,2);
    \draw [line width=0.008cm,gray] (1,-2) -- (1,2);
    \draw [line width=0.008cm,gray] (2,-2) -- (2,2);
    \draw [line width=0.008cm,gray] (3,-2) -- (3,2);
    \draw [line width=0.008cm,gray] (4,-2) -- (4,2);
    \draw [line width=0.008cm,gray] (-2,-1) -- (5,-1);
    \draw [line width=0.008cm,gray] (-2,0) -- (5,0);
    \draw [line width=0.008cm,gray] (-2,1) -- (5,1);
    
    \node [scale = 0.035cm] at (-1,-2.3) {$-1$};
    \node [scale = 0.035cm] at (0,-2.3) {$0$};
    \node [scale = 0.035cm] at (1,-2.3) {$1$};
    \node [scale = 0.035cm] at (2,-2.3) {$2$};
    \node [scale = 0.035cm] at (3,-2.3) {$3$};
    \node [scale = 0.035cm] at (4,-2.3) {$4$};
    \node [scale = 0.035cm] at (-2.3,-1) {$-1$};
    \node [scale = 0.035cm] at (-2.3,0) {$0$};
    \node [scale = 0.035cm] at (-2.3,1) {$1$};

  \end{tikzpicture}
  \caption{Example of circle dataset; the circles have radius $r=1$ and are separated by distance $d=1$.)}
\label{circle}
\end{figure}
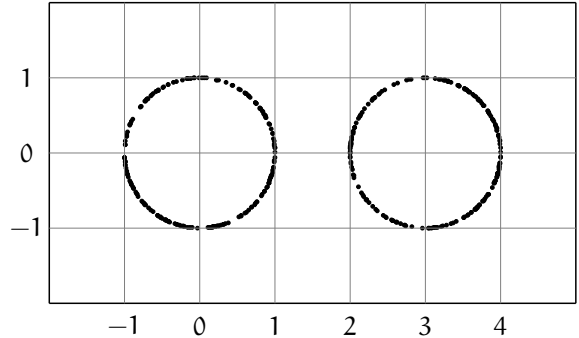

Suppose that we fix the radius $r$ and vary the separation distance $d$. Intuitively, the larger the change in separation distance, the more different we regard the point clouds.
Thus, we expect that a large change in $d$ will result in a large matching distance between the 2-parameter persistence modules constructed from the point clouds.

We generated $60$ datasets using $r=3$ and $d \in \{0.5, 1, 1.5, \ldots, 30\}$. 
We computed 2-parameter persistence modules for each dataset and computed the matching distance between each pair of modules.

Consider a pair of datasets, one with separation distance $d_1$ and the other with separation distance $d_2$.
Figure \ref{circledata} displays the matching distance between the persistence modules plotted against the \emph{difference} in separation distance $d_2 - d_1$. 
For the plot, we chose four representative values of $d_2$ from those listed above:  $d_2 \in \{10, 15, 22, 28\}$.
For each of these four values of $d_2$, we plot the matching distance for those datasets with separation distance $d_1$ such that the difference $d_2 - d_1$ in the range from $0.5$ to $d_2 - 0.5$.


Figure \ref{circledata} (left) shows the trend when $d_2$ is $10$ or $15$.
As the separation distance $d_2-d_1$ increases, the matching distance of the pairs of data sets increases initially, but then remains nearly constant.
In the region of increasing matching distance, we note three linear segments, with a jump between each.
Though the plots display nearly the same shape for $d_2=10$ and $d_2=15$, we note that the matching distance attains a higher value for $d_2 = 15$, indicating that the matching distance reveals a greater difference between the persistence modules when one dataset consists of circles that are farther apart.

When the separation distance $d_2$ is $22$ or $28$, as shown in Figure \ref{circledata} (right), the matching distance of that pair follows almost the same trend as in Figure \ref{circledata} (left). 
However, when $d_2 = 22$, we note some randomness, possibly due to sampling irregularities, when $d_2 - d_1$ is small.
Also, we note only two linearly increasing segments, with a single jump, when $d_2 = 28$.
Interestingly, the near-constant part of the plot is higher when $d_2$ is the smaller of the two values, which is contrary to what we observed in Figure \ref{circledata} (left).

\begin{figure}[h]
    \centering
    \captionsetup{justification=centering}  
    \includegraphics[scale=0.64]{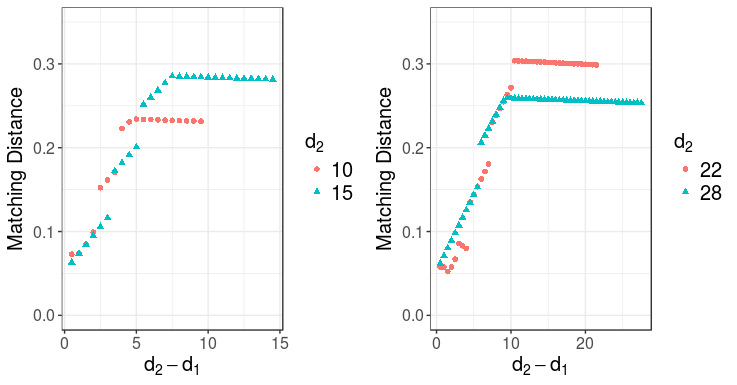}
    \caption{Distribution of matching distance for two-circle example}
    \label{circledata}
\end{figure}

In order to understand the structure observed in Figure \ref{circledata}, we looked into the barcodes involved in the matching distance calculation.
Recall that the matching distance between two 2-parameter persistence modules is the minimum weighted bottleneck distance between two barcodes, one from each persistence module.
RIVET provides us access to these barcodes.


For example, consider the pair of datasets for $d_1 = 8$ and $d_2 = 10$.
The matching distance between the persistence modules is plotted as one of the red dots in Figure \ref{circledata} (left).
The barcodes that realize this matching distance are shown (as persistence diagrams) in Figure \ref{circle_rivet}.
We compared these persistence diagrams to understand the matching distance between the persistence modules.
We observe that the persistence diagram for the dataset with separation distance $d_1 = 8$ has only three dots with finite coordinates\footnote{Our persistence diagrams for zero-degree persistent homology always have a dot at $(0,\infty)$ since there is one connected component that persists at all distance scales.} (i.e., the barcode has only three finite bars).
In comparison, the persistence diagram for the dataset with separation distance $d_2 = 10$ has five dots with finite coordinates (i.e., the barcode has only five finite bars). 
Given the small number of points, we manually matched the points and computed the matching distance. In this example, the two dots far from the diagonal are matched together in the optimal matching, and the distance between that pair is what gives the matching distance. By identifying how points are matched in persistence diagrams, we could better understand the structure in Figure \ref{circledata}.

\begin{figure}[h]
  \captionsetup{justification=centering}  
  \centering
  \begin{tikzpicture}[scale=0.25]
    \draw [<->,>=stealth] (15,0) -- (0,0) -- (0,15);
    \draw (0,0) -- (14,14);

    \foreach \Point in {(1.1,3),(3.1,4.1),(0.08,9.6)}{
        \filldraw [blue] \Point circle (0.3);
    }
    
    \foreach \Point in {(1.1,4.1),(2.2,5.8),(3.3,5.84),(4.4,5.6),(2.2,13.9)}{
        \filldraw [red] \Point circle (0.3);
    }
  \end{tikzpicture}

   \caption{Persistence diagrams involved in the matching distance calculation for $d_1=8$ (blue) and $d_2=10$ (red).}
    \label{circle_rivet}
\end{figure}
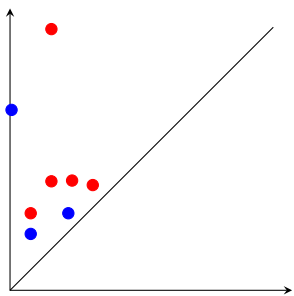

To understand the near-constant part of the plot, we looked at the persistence diagram that realize the matching distances between pairs of datasets.
Let $\mathcal{M}_d$ be the 2-parameter persistence module computed from the dataset with separation distance $d$.
In Figure \ref{circledata} (left), the near-constant portion of dots for $d_2 = 10$ extends from $d_2 - d_1 = 5$ (that is, $d_1 = 5$) to $d_2 - d_1 = 9.5$ (that is, $d_1 = 0.5$).
We examined the the calculation of the matching distance between $\mathcal{M}_{5}$ and $\mathcal{M}_{10}$, and also the matching distance between $\mathcal{M}_{0.5}$ and $\mathcal{M}_{10}$.
We found that in both matching distance calculations, the line $\ell$ that minimizes the bottleneck distance (thus realizing the matching distance) is the same. 
Let $\mathcal{D}^{\ell}_d$ be the persistence diagram obtained from $\mathcal{M}_d$ along line $\ell$.
We found that the barcode $\mathcal{D}^{\ell}_{10}$ has only one finite bar. When matching $\mathcal{D}^{\ell}_{10}$ to $\mathcal{D}^{\ell}_{0.5}$ or $\mathcal{D}^{\ell}_{5}$, this finite bar is always matched to the diagonal, and this gives the maximum distance between matched points. 
Thus, for the persistence modules that produce the near-constant portion of the plot, the matching distance is determined by the distance of the finite bar in $\mathcal{D}^{\ell}_{10}$ from the diagonal.

Since circles have nontrivial first homology, we performed a second experiment to investigate the effect of varying the circle radius on the matching distance between first persistent homology modules. 
Specifically, we generated data sets with circle radius $r \in \{0.2, 0.4, 0.6, \ldots, 6\}$ and fixed separation distance $d = 3$. 
We computed the 2-parameter persistence modules using first homology for each dataset, and compared them pointwise using the matching distance.
In this experiment, we observed patterns very similar to those displayed in Figure \ref{circledata}: as the difference in radius increases, the matching distance first increases, but then becomes near-constant. Furthermore, the near-constant portion is slightly decreasing as the difference in radius increases.

Lastly, we wanted to determine whether our two-circle experiments are sensitive to the presence of outliers, given that an important advantage of 2-parameter persistent homology is robustness against noise.
In our previous experiments, we sampled points precisely from the circles, but now we introduced some noise.
We added a small error to $20\%$ or $40\%$ of the data points, and re-computed the matching distances.
We found that the distribution of the matching distance for noisy data shares almost the same features as for the original data sets. 
We conclude that the matching distance still provides information about the change in separation distance or circle radius, even in the presence of noisy data. This confirms that two-parameter persistent homology is robust against outliers, which is one of the primary motivations for this study.

\section{Discussion and Future Research}

Our findings, though for very simple datasets, suggest that the matching distance can provide a notion of similarity for point-cloud data.
This research provides a step towards a deeper understanding of what matching distance reveals about the similarity or difference between point-cloud datasets. 
Moreover, this leads to further questions regarding how to quantify the similarity between geometric data.

In order to better understand our results in Figure $\ref{circledata}$, we would like to study why the jumps appear in the increasing segments in the plot of matching distances.
We would like to determine why the near-constant part of the plot is slightly decreasing as $d_2-d_1$ increases.
We are intrigued by the fact that the value of the matching distance along the near-constant part of the plot initially increases with $d_2$, but then decreases when $d_2$ gets sufficiently large.
In our experiments, the maximum matching distance occurs when $d_2 = 21$. 
We would like to study this further.

In this work, we obtained 1-parameter persistence modules by restricting 2-parameter persistence modules along lines of nonnegative slope.
We would like to explore the structures that exist along lines of negative slope, but this is algebraically complicated and would likely involve zigzag persistence, as discussed in \cite{zigzag}.

Furthermore, we would like to extend this research to the analysis of real-world data. 
To give one example, textual data such as Wikipedia articles can be converted to high-dimensional vectors --- e.g., using a word2vec algorithm.
We would like to use the matching distance to explore similarities between articles, and to compare collections of the article vectors with random point-cloud data.

\section*{Acknowledgements}

We would like to thank Professor Matthew Wright and Professor Matthew Richey for their guidance in this project. We would also like to thank the anonymous reviewer for many suggestions that greatly improved this paper. In addition we are grateful for the Collaborative Undergraduate Research and Inquiry (CURI) program for the generous support of undergraduate research at St.\ Olaf College. This work was supported by NSF DMS-1606967 and NSF DMS-1045015.


\bibliographystyle{siam}
\nocite{*}
{\small \bibliography{Matchingdistance} }

\end{document}